\newcommand{\tV}{\vert\kern-0.25ex\vert\kern-0.25ex\vert}
\newcommand{\tn}[1]{^{\otimes #1}}
\newcommand{\bb}{\mathbb}
\newcommand{\norm}[1]{\ensuremath{\lVert #1 \rVert}}
\newcommand{\ket}[1]{\ensuremath{\lvert #1 \rangle}}
\theoremstyle{plain}
\newtheorem{thm}{Theorem}
\newtheorem{prop}[thm]{Proposition}
\theoremstyle{definition}
\theoremstyle{remark}
\definecolor{nblue}{rgb}{0.2,0.2,0.7}
\definecolor{ngreen}{rgb}{0.1,0.5,0.1}
\definecolor{nred}{rgb}{0.8,0.2,0.2}
\definecolor{nblack}{rgb}{0,0,0}
\newcommand{\hide}[1]{}
\crefname{algocf}{algorithm}{algorithms}
\Crefname{algocf}{Algorithm}{Algorithms}
\crefname{algocf}{algorithm}{algorithms}
\Crefname{algocf}{Algorithm}{Algorithms}
\newcommand{\hd}[1]{\vspace{2mm} \noindent \textbf{#1}}
\newcommand{\inner}[2]{\ensuremath{\langle #1| #2 \rangle}}
\newcommand{\id}{\mathds{1}}
\begin{document}
\title{Clifford recompilation for faster classical simulation of quantum circuits}
\author[1,3]{Hammam Qassim} \email{hqassim@uwaterloo.ca} 
\author[2,3]{Joel J. Wallman}
\author[1,2,3]{Joseph Emerson}
\affiliation[1]{Department of Physics and Astronomy, University of Waterloo, Waterloo, Canada}
\affiliation[2]{Department of Applied Mathematics, University of Waterloo, Waterloo, Canada}
\affiliation[3]{Institute for Quantum Computing, Waterloo, Canada}
\date{\today}
\twocolumn[
  \begin{@twocolumnfalse}
    \maketitle
\begin{abstract}
Simulating quantum circuits classically is an important area of research in quantum information, with applications in computational complexity and validation of quantum devices.
One of the state-of-the-art simulators, that of Bravyi et al, utilizes a randomized sparsification technique to approximate the output state of a quantum circuit by a stabilizer sum with a reduced number of terms.
In this paper, we describe an improved Monte Carlo algorithm for performing randomized sparsification. 
This algorithm reduces the runtime of computing the approximate state by the factor $\ell/m$, where $\ell$ and $m$ are respectively the total and non-Clifford gate counts.
The main technique is a circuit recompilation routine based on manipulating exponentiated Pauli operators.
The recompilation routine also facilitates numerical search for Clifford decompositions of products of non-Clifford gates, which can further reduce the runtime in certain cases by reducing the 1-norm of the vector of expansion, $\norm{a}_1$.
It may additionally lead to a framework for optimizing circuit implementations over a gate-set, reducing the overhead for state-injection in fault-tolerant implementations. 
We provide a concise exposition of randomized sparsification, and describe how to use it to estimate circuit amplitudes in a way which can be generalized to a broader class of gates and states.
This latter method can be used to obtain additive error estimates of circuit probabilities with a faster runtime than the full techniques of Bravyi et al.
Such estimates are useful for validating near-term quantum devices provided that the target probability is not exponentially small.  
\end{abstract}
  \end{@twocolumnfalse}
]

\section{Introduction} 
Digital simulation of quantum phenomena is a central problem in computational physics, with deep theoretical and practical significance.
On the theory side, research into this problem has produced surprising connections between different branches of mathematics and physics, while providing major insight into the inner-workings of quantum theory.
On the practical side, there is high demand for faster and more accurate simulations in fields such as quantum computing, materials science, and quantum chemistry.
In these disciplines, it is desirable to have a toolkit of simulation techniques which can handle the wide variety of quantum systems encountered.

Quantum computers are widely believed to be hard to simulate classically.
Nevertheless, this hardness is only asymptotic, and it is always possible in principle to simulate larger and larger quantum computers by optimizing the usage of classical resources.
Efforts to demonstrate such optimizations have surged recently \cite{de2018massively,smelyanskiy2016qhipster,chen201864,boixo2017simulation,chen2018classical}.
These efforts are partially motivated by upcoming quantum computers containing on the order of 100 qubits \cite{Preskill2018,Neill2018}.
Since these prototypes are likely to be considerably noisy, the ability to simulate their ideal counterparts is essential for validating their performance and to aid in their design and debugging.
Furthermore, simulation of quantum circuits is directly tied to establishing when quantum computers are useful, as demonstrating quantum supremacy requires sampling from distributions which are provably hard to (approximately) sample from using classical means \cite{Harrow_2017}.

Near-Clifford circuits constitute an important class of quantum circuits.
Fault-tolerant implementations of quantum computing in the near-term are likely to contain a small number of non-Clifford gates. 
This is the case since the overhead associated with implementing non-Clifford gates fault-tolerantly is huge, for example accounting for nearly $90\%$ of the number of physical qubits in surface code implementations of Shor's algorithm \cite{Fowler2012}.  
A host of classical simulation schemes have recently been proposed for simulating near-Clifford circuits, with runtimes exhibiting mild exponential dependence on $n$ \cite{garcia2014geometry,bravyi2016trading,bravyigosset,bravyi2018simulation,howard2017application,bennink2017monte, heinrich2018robustness}.
State-of-the-art techniques rely on decomposing the $n$-qubit output state of a near-Clifford circuit $U$ as a superposition of stabilizer states \cite{bravyigosset,bravyi2018simulation},
\begin{align}\label{eq: decomposition}
\ket{\psi} \equiv U\ket{0^n}=\sum_{i=1}^N a_i \ket{v_i},
\end{align}
with $N \ll 2^n$.
These techniques allow for approximate sampling from the outcome distribution of the circuit, with the runtime for producing a single sample scaling like $O(N \epsilon^{-2})$, where $\epsilon$ is the error in total variation distance \cite{bravyigosset,bravyi2018simulation}.
The minimum number of terms over all such decompositions of $\ket{\psi}$ is called its \textit{stabilizer rank}, and is a measure of the complexity of simulating $\ket{\psi}$.
Finding stabilizer decompositions with a small number of terms is typically quite hard, even for simple product states \cite{bravyigosset,bravyi2018simulation}.
This motivates using \textit{approximate} stabilizer decompositions. 
It was shown in \cite{bravyi2018simulation} that an approximation to $\ket{\psi}$ with a smaller number of stabilizer terms can be obtained by ``randomly sparsifying" any decomposition of the type in \cref{eq: decomposition}.
This is done by generating $k$ random stabilizer states, each chosen independently to be $\ket{v_i}$ with probability $|a_i| / \norm{a}_1$,
then setting the approximate state $\ket{\tilde{\psi}}$ to be their equal superposition (with a uniform scaling factor).
If $k$ is large enough (roughly $\norm{a}_1^2 \epsilon^{-2}$), then $\norm{\psi - \tilde{\psi}} \leq \epsilon$ with high probability.
Typically $k \ll N$, so that using the approximate state $\ket{\tilde{\psi}}$ for simulation purposes is preferable.
 
Randomized sparsification can be performed using a sum-over-Cliffords Monte Carlo method \cite{bravyi2018simulation}, which we briefly review before stating our contribution. 
Given a circuit $U$, each non-Clifford gate in $U$ can be expanded as a linear combination of Clifford gates.
The entire circuit can then be expressed as
\begin{align}
U = \sum_i a_i C_i,
\end{align} 
for a collection of (typically exponentially many) Clifford circuits $C_i$.
The sum-over-Cliffords method uses this decomposition to compute the stabilizer states in \cref{eq: decomposition} while preserving the relative phases between them.
For each term, a phase-sensitive Clifford simulator is used to convert the description of the circuit $C_i$ to a description of the stabilizer state $\ket{v_i} = C_i\ket{0^n}$ while keeping track of the global phase.
Similarly, the stabilizer terms of the approximate state $\ket{\tilde{\psi}}$ can be computed by applying circuit $C_i$ with probability $|a_i|/\norm{a}_1$.
The Clifford simulator computes the so-called CH-form of $C_i \ket{0^n}$ by performing an update for each gate in $C_i$. 
This computation takes time $O(\ell n^2)$, where $\ell$ is the total number of gates in the circuit.
Thus to compute the CH-form for all stabilizer terms comprising the approximate state $\ket{\tilde{\psi}}$, the method in \cite{bravyi2018simulation} runs in time $O(k \ell n^2)$.

In the current paper, we report an improved variant of randomized sparsification based on recompiling the circuit using exponentiated Pauli operators. 
This variant removes the dependence on the total number of gates in the circuit.
More precisely, it reduces the time cost of computing the CH description of $\ket{v_i}$ from $O(\ell n^2)$ to $O(mn^2)$, where $m$ is the non-Clifford gate count in $U$.
This reduces the runtime of key subroutines in \cite{bravyi2018simulation} by the factor $\ell/m$, which can be significant for circuits in which the number of Clifford gates is very large compared to non-Clifford gates.
This is relevant, for example, in the case of the QAOA circuit simulated in \cite{bravyi2018simulation}, which contains hundreds of Clifford gates and only $64$ T gates. 
The above savings are achieved at an additional memory cost of $O(nm)$ bits, which is negligible in the regime of interest.

As a side benefit, the Clifford recompilation method makes it easier to numerically search for decompositions of the kind in \cref{eq: decomposition} with a reduced 1-norm. 
As the exponential scaling of the runtimes in \cite{bravyi2018simulation} is determined by $\norm{a}_1^2$, this allows for a significant speed-up in some cases. 
In addition to reducing $\norm{a}_1^2$, it can even be the case that some of the gates in the recompiled circuit cancel out, which
suggests a systematic way of optimizing the compilation step to reduce the number of non-Clifford gates in a given circuit, with the possibility of significant overhead savings in fault-tolerant implementations.
 
Another use of Clifford recompilation is in estimating circuit amplitudes.
It is straight-forward to see that the random variable $\inner{x}{\tilde{\psi}}$ is an unbiased estimator of the amplitude $\inner{x}{\psi}$, where $x \in \bb{F}_2^n$ denotes a computational basis state.
Clifford recompilation reduces the complexity of evaluating $\inner{x}{\tilde{\psi}}$ by the factor $\ell/m$, in comparison to \cite{bravyi2018simulation}.
Estimates of the amplitude obtained in this way are used in the Metropolis simulator in this latter work.
They can also be used to estimate the probability $|\inner{x}{\psi}|^2$ up to additive error $\epsilon$ with a runtime scaling like $O(\norm{a}_1^2 \epsilon^{-2})$.
Learning probabilities in this way can be generalized to a wider range of quantum circuits.
It is also much less memory intensive, and is better suited for massive parallelization and GPU-computing.
However, in general such polynomial precision estimates are only useful as long as the target probability is not exponentially small.

\textit{Notation.} We write $X$, $Y$, and $Z$ for the single qubit Pauli matrices, and typically use the letters $P, Q$ and $W$ to refer to Hermitian elements of the n-qubit Pauli group. 
The shorthand $M_{b:a}$, with $b>a$, denotes the product $M_b M_{b-1} \dots M_a$.

\section{Improved randomized sparsification}\label{sec: improved sparsification}
\hd{The CH-form of stabilizer states.} We begin by reviewing the CH-form of stabilizer states.
A C-type Clifford is one which satisfies $U_C \ket{0^n} = \ket{0^n}$. 
An H-type Clifford consists of Hadamard gates on some subset of the qubits.
Any stabilizer state of $n$-qubits can be written as
\begin{align}
\ket{\phi} = \omega U_C U_H \ket{s},
\end{align}
where $U_C$ is a C-type Clifford, $U_H$ is an H-type Clifford, $s \in \bb{Z}_2^n$, and $\omega$ is a complex number.
$U_C$ can be described by a stabilizer tableau specifying its action on the Pauli operators $X_i$ and $Z_i$ for $i=1, \dots, n$, 
and $U_H$ can be described by an $n$-bit string specifying which qubits are acted on.
As shown in \cite{bravyi2018simulation}, the tuple $(\omega,U_C,U_H)$ describing the state $\ket{\phi}$ can be updated in time $O(n)$ under acting on $\ket{\phi}$ with a gate from the set $\{\text{CNOT, CPHASE, PHASE}\}$, and in time $O(n^2)$ under applying a Hadamard gate. 
For our circuit recompilation, we need the following fact concerning updates of the CH-form under Clifford gates of a particular exponential form.
\begin{prop}\label{exponential Pauli updates}
The tuple $(\omega,U_C,U_H)$ can be updated in time $O(n^2)$ under applying a Clifford gate of the form $\exp(i \theta P)$, where $P$ is a Pauli and $\theta \in (\pi/4)\bb{Z}$.
\end{prop}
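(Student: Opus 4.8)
The plan is to reduce the update to a bounded number of the elementary CH-form updates already established in \cite{bravyi2018simulation}, namely the $O(n)$ updates for CNOT, CPHASE, and PHASE and the $O(n^2)$ update for the Hadamard. First I would dispose of the easy values of $\theta$. Since $P^2=\id$ we have $\exp(i\pi P)=-\id$, so $\theta$ matters only modulo $\pi$ up to a global phase, and modulo $\pi$ the admissible angles are $0,\pi/4,\pi/2,3\pi/4$. For $\theta\equiv 0$ the gate is the identity; for $\theta\equiv\pi/2$ we have $\exp(i\theta P)=iP$, so the update amounts to applying the Pauli $P$, which is $O(n^2)$: writing $\hat P=U_H U_C^\dagger P U_C U_H$, conjugating $P$ through the tableau of $U_C$ costs $O(n^2)$ and through $U_H$ costs $O(n)$, after which $\hat P$ acts on $\ket{s}$ as a phase times a bit-flip, updating only $\omega$ and $s$. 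Since $\exp(i(3\pi/4)P)=-\exp(-i(\pi/4)P)$, everything then reduces to the core case $\exp(\pm i(\pi/4)P)=\tfrac{1}{\sqrt2}(\id\pm iP)$.

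For the core case the main idea is to conjugate $P$ to a single-qubit $Z$. I claim there is a Clifford $W$, expressible as a product of $O(n)$ C-type generators (CNOT, CPHASE, PHASE) together with \emph{exactly one} Hadamard, such that $W^\dagger P W=\pm Z_{j_0}$ for some qubit $j_0$. Granting this, $\exp(i(\pi/4)P)=W\exp(\pm i(\pi/4)Z_{j_0})W^\dagger$, and up to a global phase $\exp(\pm i(\pi/4)Z_{j_0})$ is just a PHASE (or PHASE$^\dagger$) gate on qubit $j_0$. I would then perform the update by applying $W^\dagger$, the single-qubit phase gate, and $W$ in turn, each step using the elementary update rules of \cite{bravyi2018simulation}. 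Since $W$ and $W^\dagger$ together contain $O(n)$ C-type gates and only two Hadamards, the total cost is $O(n)\cdot O(n)+O(1)\cdot O(n^2)=O(n^2)$, as required.

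It remains to justify the claim, which is the heart of the argument; I would do so by exhibiting the reduction circuit $W^\dagger$ that conjugates $P$ down to $\pm Z_{j_0}$. Writing $P$ in symplectic form $P\sim X(\alpha)Z(\beta)$, I would (i) reduce the $X$-support $\alpha$ to a single coordinate $e_{j_0}$ with a sequence of $O(n)$ CNOT gates, each of which peels off one $X_j$ and leaves a Pauli of the form $\pm X_{j_0}Z(\beta')$; (ii) cancel every off-site factor $Z_j$ with $j\neq j_0$ by applying CPHASE between $j_0$ and $j$, using $X_{j_0}$ as a pivot so that each CPHASE multiplies in exactly one $Z_j$, reducing $P$ to a single-qubit Pauli on $j_0$; and (iii) apply a PHASE gate if this residual Pauli is $Y_{j_0}$ and then a single Hadamard to send $X_{j_0}$ to $Z_{j_0}$. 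The accumulated signs collect into a global $\pm$ that is absorbed into $\omega$ or into the sign of $\theta$.

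The main obstacle, and the reason the elementary decomposition must be chosen this way, is controlling the number of Hadamards: the textbook circuit for a Pauli rotation diagonalizes $P$ with as many as $n$ single-qubit basis changes, which would force $\Theta(n)$ invocations of the costly $O(n^2)$ Hadamard update and yield only an $O(n^3)$ bound. The CNOT-and-CPHASE reduction above is precisely what confines the basis change to a single Hadamard, so the expensive update is invoked only $O(1)$ times. The remaining work---propagating the accumulated signs and phases correctly through each conjugation---is routine bookkeeping.
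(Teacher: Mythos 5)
Your argument is correct and achieves the stated $O(n^2)$ bound, but it takes a genuinely different route from the paper. The paper does not conjugate $P$ down to a single-qubit $Z$ and replay an explicit circuit; instead it computes $P' = (U_C U_H)^\dagger P\, U_C U_H$ in one $O(n^2)$ pass through the tableau and then lets $P'$ (for $\theta=\pi/2$) or $\exp(i\tfrac{\pi}{4}P')$ (for $\theta=\pi/4$) act directly on the basis state $\ket{s}$, producing at most a superposition $\ket{s}+i^{\delta}\ket{s'}$ under $U_H$; re-expressing that two-term superposition in CH form is then delegated to Proposition 4 of Bravyi et al.\ together with the right-multiplication update rules for $U_C$. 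Your version replaces that last step entirely: by diagonalizing $P$ with $O(n)$ C-type gates and a single Hadamard on each side, you only ever invoke the elementary left-multiplication updates (at most two of which are the expensive $O(n^2)$ Hadamard updates), so you never need the superposition-recanonicalization machinery at all. What the paper's route buys is a single tableau conjugation rather than $O(n)$ sequential gate updates (likely a better constant); what yours buys is self-containment, resting only on the per-gate update rules already quoted in the text. One small hole to patch: your reduction (i)--(iii) uses $X_{j_0}$ as a pivot and so silently assumes $P$ has nonempty $X$-support. When $P=\pm Z[\beta]$ is diagonal the described circuit does not apply, but the fix is immediate --- CNOTs alone collapse the $Z$-support onto one qubit, with zero Hadamards needed --- so the claim should read ``at most one Hadamard'' rather than ``exactly one.''
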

\begin{proof}
Without loss of generality we assume that $\theta \in \{\pi/2, \pi/4 \}$.
We can compute a Pauli $P'$ such that $(U_C U_H)^\dagger P U_C U_H = P'$ in time $O(n^2)$, using the stabilizer tableau of $U_C$ and the fact that $HXH=Z$.
Given such $P'$, the state can be updated in time $O(n^2)$.
This follows from the following observations.
If $\theta = \pi/2$, then $\exp(i \theta P) wU_CU_H\ket{s}=wU_CU_H(iP')\ket{s}=w' U_C U_H \ket{s'}$, for some $w'$ and $s'$ which can be computed in time $O(n)$.
If $\theta = \pi/4$, then $\exp(i \theta P) wU_CU_H\ket{s}=wU_CU_H\exp( i \frac{\pi}{4} P')\ket{s}=w' U_C U_H (\ket{s} +i^{\delta} \ket{s'})$, for some $\delta \in \bb{Z}_4$, $\omega'$, and $s'$, all of which can be computed in time $O(n)$.
If $s=s'$ then the update is trivial.
If $s \neq s'$, then the update can be computed in time $O(n^2)$ by combining Proposition 4 in \cite{bravyi2018simulation} with the update rule for $U_C$ under right-multiplication by gates from the set $\{\text{CNOT, CPHASE, PHASE}\}$, which is given in the same section in \cite{bravyi2018simulation}. 
\end{proof}
   
\hd{Clifford recompilation}. We now describe recompiling the circuit in order to reduce the complexity of computing the CH-forms of the output state.
Consider a quantum circuit $U$ consisting of $\ell$ gates, $m$ of which are non-Clifford gates. 
We can write $U$ in the form
\begin{align}\label{generic circuit}
U = C_m U_m C_{m-1} U_{m-1} \dots C_1 U_1 C_0,
\end{align}
where $C_i$ are Clifford circuits, and $U_i$ are non-Clifford gates.
We can then ``pull" all non-Clifford gates to the end of the circuit
\begin{align}\label{eq: circuit}
U = U'_m \dots U'_1 C_m  C_{m-1} \dots C_1  C_0,
\end{align}
where $U_i' =  C_{m:i}U_iC_{m:i}^\dagger$.
We can absorb the action of the Clifford gates into the input state, so that the output state is given by
\begin{align}\label{eq: reduction}
\ket{\psi} = U \ket{0^n}= U' \ket{\phi}, \quad  U' \equiv U_{m:1}',
\end{align}
where $\ket{\phi}\equiv C_{m:0} \ket{0^n}$ is a stabilizer state. 
Note that the CH-form of $\ket{\phi}$ can be computed in time $O(cn^2)$, where $c \equiv \ell - m$ is the number of Clifford gates in $U$.
Storing the CH-form of $\ket{\psi}$ in memory requires $O(n^2)$ bits.

Next we discuss how to compute the $U_i'$.
Suppose we are given a Clifford decomposition of each non-Clifford gate in $U$;
\begin{align}\label{eq: clifford decomposition}
U_i = \sum_{j}b_{ij}V_{ij},
\end{align}
where each $V_{ij}$ is a product of a small number of elementary Clifford gates.
We will use exponentiated Pauli matrices to compute and store $U_i'$ efficiently.
First note that each $V_{ij}$ can be written as a product of Clifford gates of the form $\exp(i \theta P)$ where $P$ is a Pauli and $\theta \in (\pi/4)\bb{Z}$ \footnote{For example, the Hadamard, PHASE, and CZ gates can be written as
\begin{align}\label{eq: Clifford exp form}
H_j &= e^{i \pi/2} e^{-i \frac{\pi}{2}Z_j} e^{i \frac{\pi}{4}Y_j}, \nonumber\\
S_j &= e^{i \pi/4} e^{-i \frac{\pi}{4}Z_j},  \nonumber\\
CZ_{jk}&= e^{i \pi/4}e^{i \frac{\pi}{4}Z_j Z_k}e^{-i \frac{\pi}{4}Z_j}e^{-i \frac{\pi}{4}Z_k}.
\end{align}
Similar identities can be derived for any Clifford gate acting on a fixed number of qubits.
}.
We can therefore write
\begin{align}
U_i = \sum_j a_{ij} \prod_{k=1}^{O(1)} \exp(i\theta_{ijk} P_{ijk}),
\end{align}
where $a_{ij}$ absorbs any phases from the elementary gates, and $\{ P_{ijk}\}$ is some collection of Pauli matrices.
Then
\begin{align}\label{eq: clifford decomposition 2}
U_i'
&= \sum_j b_{ij} C_{m:i}\left[\prod_{k} \exp(i\theta_{ijk} P_{ijk})\right]C_{m:i}^\dagger \notag\\
&= \sum_j b_{ij} \prod_{k} \exp(i\theta_{ijk} C_{m:i}P_{ijk}C_{m:i}^\dagger) \notag\\
&= \sum_j b_{ij} \prod_{k} \exp(i\theta_{ijk}' P_{ijk}') \notag\\
&= \sum_j b_{ij} V'_{ij}.
\end{align}
where 
\begin{align}
V'_{ij} \equiv \prod_{k} \exp(i\theta_{ijk}' P_{ijk}'),
\end{align}
$\theta_{ijk}'\in\{\pm \theta_{ijk}\}$, and the $P_{ijk}'$ are Pauli matrices each of which can be computed efficiently in time $O(c)$.
Storing this description of $U'$ requires $O(nm)$ bits of memory, since each Pauli matrix requires $O(n)$ bits to specify.
We remark that the use of Pauli exponential version of Clifford and non-Clifford gates has recently found application in the context of error-correction \cite{litinski2018game}. 

To conclude, we can write the output state of the circuit $U$ as
\begin{align}
\ket{\psi} 
&= U' \ket{\phi} \notag\\
&= \sum_{i} a_i \ket{u_i},
\end{align}
where each stabilizer state $\ket{u_i}$ is produced by acting on the fixed stabilizer state $\ket{\phi}$ with a circuit consisting of $O(m)$ Clifford gates of the form $\exp(i \theta P)$, $\theta \in (\pi/4)\bb{Z}$.
Combining this with \Cref{exponential Pauli updates} shows that we can produce the CH-form of $\ket{u_i}$ in time $O(m n^2)$, which improves on the $O(\ell n^2)$ required by the method described in \cite{bravyi2018simulation}.
\begin{algorithm}[t]
\LinesNumbered
\SetKwData{Left}{left}\SetKwData{This}{this}\SetKwData{Up}{up}
\SetKwFunction{Union}{Union}\SetKwFunction{FindCompress}{FindCompress}
\SetKwInOut{Input}{Input}\SetKwInOut{Output}{Output}
\Input{The CH-form $(\omega_0, U^0_C, U_H^0 )$ of the stabilizer state $\ket{\phi_y}$. A decomposition of the form in \cref{eq: clifford decomposition 2} for $i=1, \dots, m$.}
\Output{A random vector $\ket{z}$ such that $\bb{E}(\ket{z}) = \ket{\psi}$.}
\Begin{
Set $(\omega, U_C, U_H) \leftarrow (\omega_0, U^0_C, U_H^0 ) $

\For{i =1, \dots, m}{
\small{
Choose $V'_{ij}$ with probability $p_i(j)\equiv|b_{ij}|/\norm{b_i}_1$;

Update  $(\omega, U_C, U_H)$ under $V'_{ij}$;

Set $\omega \leftarrow (b_{ij}/|b_{ij}|)\omega$;

}
}
Output $\ket{z} = \norm{b}_1 w U_C U_H \ket{s}$.
}
\bf{end}
\caption{Monte Carlo algorithm for improved randomized sparsification}
\label{alg: improved randomized sparsification}
\end{algorithm}

This improvement is applicable to both the norm estimation and the heuristic Metropolis algorithms of \cite{bravyi2018simulation}. 
The total time cost of computing the CH-forms of the approximate state is $O(k m n^2)$ in our method, vs. $O(k \ell n^2)$ in \cite{bravyi2018simulation}.
The runtime reduction factor of $\ell/m$ is significant because the randomized sparsification subroutine requires computing an exponential number of CH-forms, roughly $k \approx \norm{a}_1^2 \epsilon^{-2}$, in order to obtain an $\epsilon$-approximation to $\ket{\psi}$.
Recall that the factor $\norm{a}_1^2$ is exponential in the number of non-Clifford gates. For a Clifford+T circuit, for example, it is equal to $2^{0.228 m}$, where $m$ is the T count.

\hd{Monte-Carlo method.}
Randomized sparsification (both the original version and our improved variant) can be implemented as a Monte Carlo algorithm.
The state space is the set of CH-tuples $(\omega, U_C, U_H)$.
In our improved variant, the initial state is the CH-form of $\ket{\phi}$.
For each gate $U'_i$ in the reduced circuit, a random Clifford gate is applied to the current state according to \cref{eq: clifford decomposition 2}. 
Namely, $V'_{ij}$ is applied with probability $|b_{ij}|/\norm{b_i}_1$.
The relative phases between the terms are accounted for by updating $\omega$ appropriately. 
After the last gate, the final state is scaled by the factor $\norm{b}_1 \equiv \prod_i \norm{b_i}_1$.
This is summarized in \Cref{alg: improved randomized sparsification}.
A direct calculation shows that the output $\ket{z}$ of this algorithm satisfies
\begin{align}\label{eq: avg of alg 1}
\bb{E}(\ket{z}) = \ket{\psi}.
\end{align}
Moreover, it is shown in \cite{bravyi2018simulation} that the average of $k$ runs of the algorithm is a random vector $\ket{\tilde{\psi}} = k^{-1} \sum_{i=1}^{k} \ket{z_i}$ which satisfies
\begin{align}
\Pr \left[ \norm{\ket{\tilde{\psi}} - \ket{\psi}} \geq \epsilon  \right] \leq \frac{\norm{a}_1^2/k}{\epsilon^{2}}.
\end{align}
For example, taking $k = (2\norm{a}_1/\epsilon)^2$ ensures that the probability of the error exceeding $\epsilon$ is at most $1/4$.
\begin{algorithm}[t]
\LinesNumbered
\SetKwData{Left}{left}\SetKwData{This}{this}\SetKwData{Up}{up}
\SetKwFunction{Union}{Union}\SetKwFunction{FindCompress}{FindCompress}
\SetKwInOut{Input}{Input}\SetKwInOut{Output}{Output}
\Input{The CH-form $(\omega_0, U^0_C, U_H^0 )$ of the stabilizer state $\ket{\phi_y}$. A decomposition of the form in \cref{eq: clifford decomposition 2} for $i=1, \dots, m$.}
\Output{A random vector $\ket{z}$ such that $\bb{E}(\ket{z}) = \ket{\psi}$.}
\Begin{
Set $(\omega, U_C, U_H) \leftarrow (\omega_0, U^0_C, U_H^0 ) $

\For{i =1, \dots, m}{
\small{
Compute \cref{eq: adaptive};

Choose $l$ with probability $p'_i(l)\equiv|a'_{il}|/\norm{a'_i}_1$;

Set  $(\omega, U_C, U_H)$ to be the CH-form of $\ket{u_{il}}$;

Set $\omega \leftarrow (a'_{il}/|a'_{il}|)\omega$;

}
}
Output $\ket{z} = \norm{b}_1 w U_C U_H \ket{s}$.
}
\bf{end}
\caption{Adaptive Monte Carlo algorithm for randomized sparsification}
\label{alg: adaptive Monte Carlo}
\end{algorithm}

\hd{Adaptive Monte-Carlo.} The convergence of the estimator in the Monte Carlo of \Cref{alg: improved randomized sparsification} can be improved by adapting the probabilistic Clifford update to the particular state at each time step.
Let the stabilizer state at the $i$th time step of the Monte Carlo be $\ket{u}$, and let
\begin{align}\label{eq: adaptive}
U'_i \ket{u} 
&=  \sum_{j} a_{ij} V'_{ij}\ket{u}, \nonumber\\
&=  \sum_{j} a_{ij} \ket{u_{ij}}, \nonumber\\
&= \sum_{l} a'_{il} \ket{u_{il}},
\end{align}
where all collinear terms in the second sum are collected in the third sum.
Note that computing \cref{eq: adaptive} can be done efficiently by updating the CH description of each term under the Clifford gates $V'_{ij}$.
It is then straightforward to show that $\norm{a'_i}_1 \leq \norm{a_i}_1$.

Consider the modified Monte Carlo in \Cref{alg: adaptive Monte Carlo}.
As before, we have $\bb{E}(\ket{z}) = \ket{\psi}$, and one can take the average of $k$ iterations of this algorithm to approximate $\ket{\psi}$.
As $\norm{a'}_1$ depends on the particular trajectory taken, we need to choose $k \approx \Delta^2 \epsilon^{-2}$, where 
\begin{align}
\Delta = \max_{\text{trajectories}} \norm{a'}_1 \leq \norm{a}_1.
\end{align}
Computing $\Delta$ is generally hard since it involves a maximization over all trajectories. 
Unless a smaller upper bound for $\Delta$ can be learned, the safe solution is to settle for the pessimistic sample size which scales like $\norm{a}_1^2$.
Even with this limitation, the above method improves the precision of the approximation obtained for the same sample size.
We can use heuristics to predict whether adaptive sampling is useful.
Let $\alpha$ be the random variable which takes value $\norm{a'}_1$ with probability equal to the sum of the probabilities of all trajectories consistent with $\norm{a'}_1$.
One can estimate the mean and variance of $\alpha$ by sampling, and heuristically argue that most trajectories have $\norm{a'}_1$ within, say, 3 standard deviations of the mean.
Thus if the mean is much smaller than $\norm{a}_1$, and the variance is not too large, it is a good indication that adaptive Monte Carlo improves the rate at which $\ket{\tilde{\psi}}$ converges to $\ket{\psi}$. 

\hd{Clifford decompositions of products of gates.} Finding a Clifford decomposition of $U_i$ with minimal 1-norm $\norm{a_i}_1$ can be done using a convex optimization software  \cite{howard2017application,bravyi2018simulation}.
These brute-force optimizations can only be performed for one- or two-qubit gates, as it is too computationally intensive for more qubits.

For an operator $A$, denote by $\Omega(A)$ the minimum 1-norm over all Clifford decompositions of $A$.
Then $\Omega$ is sub-multiplicative, i.e.
\begin{align}\label{eq: product submultiplicativity}
\Omega(AB) \leq \Omega(A) \Omega(B).
\end{align}
For two operators $A$ and $B$, let us call a Clifford decomposition of the product $AB$ \textit{contractive} if its 1-norm is strictly less than $\Omega(A)\Omega(B)$.
Finding a contractive Clifford decomposition of $AB$ is generally hard unless $A$ and $B$ are at most 2-qubit gates acting on the same pair of qubits, so that we can use brute-force convex optimization.
An exception is when the two operators can be mapped to the same pair of qubits via a Clifford $C$.
In that case, one can find a contractive decomposition of $C AB C^\dagger$ using convex-optimization, and then apply the inverse $C^\dagger$, using the idea of \cref{eq: clifford decomposition 2} to keep the terms in a form enabling fast CH updates.
This method is well-suited to the case of Clifford+$e^{i\theta P}$ circuits, which we discuss in the next section.
  
Note that strict inequality in \cref{eq: product submultiplicativity} does not hold, since two non-Clifford gates can multiply to a Clifford gate. 
Furthermore, in the next section we give examples where equality holds although the product $AB$ is not a Clifford gate. 

Sub-multiplicativity also holds under the tensor-product, 
\begin{align}
\Omega(A \otimes B ) \leq \Omega(A) \Omega(B).
\end{align}
The question of whether $\Omega$ is multiplicative under the tensor product is open.
For tensor products of 1- and 2-qubit gates, our numerical search does not find any counter examples, and thus we conjecture that $\Omega$ is multiplicative for 1- and 2-qubit gates.
The analogue of $\Omega$ for quantum states was shown in \cite{bravyi2018simulation} to be multiplicative when each tensor factor is a state of at most three qubits, which suggests that a similar fact should hold for unitaries.
However, the proof technique in \cite{bravyi2018simulation} is not straight-forward to generalize to the unitary version, since it relies on the so-called stabilizer-alignment property of quantum states, which (to our knowledge) has no clear analogue for unitaries.
We note that this multiplicative behavior for states, which we also conjecture to hold for unitaries, stands in contrast to other monotones such as the robustness of magic \cite{howard2017application}, and its channel version \cite{bennink2017monte}, which are known to be strictly sub-multiplicative in important cases. 

\section{Example: Clifford+$e^{i\theta P}$ circuits}
As an application of the methods presented so far, we consider circuits where each non-Clifford gate has the form $e^{i\theta P}$, where $P$ is a Pauli operator, and $\theta \in [-\pi,\pi]$.
This class of circuits includes important cases such as Clifford$+T$ circuits, and more generally Clifford+$Z$ rotation circuits.
For a circuit $U$ of this type containing $m$ non-Clifford gates, \cref{eq: reduction} reduces to
\begin{align}\label{cliff and exp pauli amp}
U\ket{0^n} = e^{i \theta_m Q_m} \dots e^{i \theta_1 Q_1}\ket{\phi},
\end{align}
for a list of Hermitian Pauli operators $Q_1, \dots, Q_m$, a list of angles $\theta_1, \dots, \theta_m$, and a stabilizer state $\ket{\phi}$.
Note that the unitary $e^{i\theta P}$ is Clifford if and only if $\theta\in (\pi/4)\bb{Z}$.
Thus we may, without loss of generality, assume that $\theta_i \in (0,\frac{\pi}{4})$, as otherwise we can ``factor out'' Clifford gates $e^{\pm i \frac{\pi}{4}Q_i}$ and absorb them into the state $\ket{\phi}$, at the cost of applying a Clifford mapping to the list of Pauli operators $Q_j$ with $j<i$.
The Clifford mapping can be quickly computed using the relation
\begin{align}
e^{i \frac{\pi}{4}P}Qe^{-i \frac{\pi}{4}P}=
\begin{cases}
Q & \text{if }[P,Q]=0\\
iPQ & \text{if } \{P,Q\}=0.
\end{cases}
\end{align}
We refer to \cref{cliff and exp pauli amp} as the \textit{canonical form} of the output state of a Clifford+$e^{i\theta P}$ circuit.

For a single qubit Pauli operator $P$, the optimal decomposition of $e^{i \theta P}$ into Clifford gates is given by \cite{bravyi2018simulation}
\begin{align}\label{eq: optimal pauli exp decoms}
e^{\pm i \theta P} = (\cos \theta - \sin \theta) \id + (\sqrt{2} \sin \theta) e^{\pm i \frac{\pi}{4} P}.
\end{align}
This decomposition achieves the minimum  1-norm,
\begin{align}\label{eq: optimal pauli exp norm}
\Omega(e^{i \theta P}) = \cos \theta + (\sqrt{2}-1) \sin \theta \quad\quad(0 \leq \theta \leq \frac{\pi}{4}).
\end{align}
This holds for $n$-qubit Pauli operators as well, which follows from i) invariance of $\Omega$ under multiplication by Clifford unitaries, ii) invariance of $\Omega$ under tensoring by the identity, i.e. $\Omega(U\otimes \id)=\Omega(U)$.

\begin{figure*}[t]
\centering
\ffigbox[\FBwidth]
{
\subfloat[One qubit]{\includegraphics[height=9cm,width=8cm]{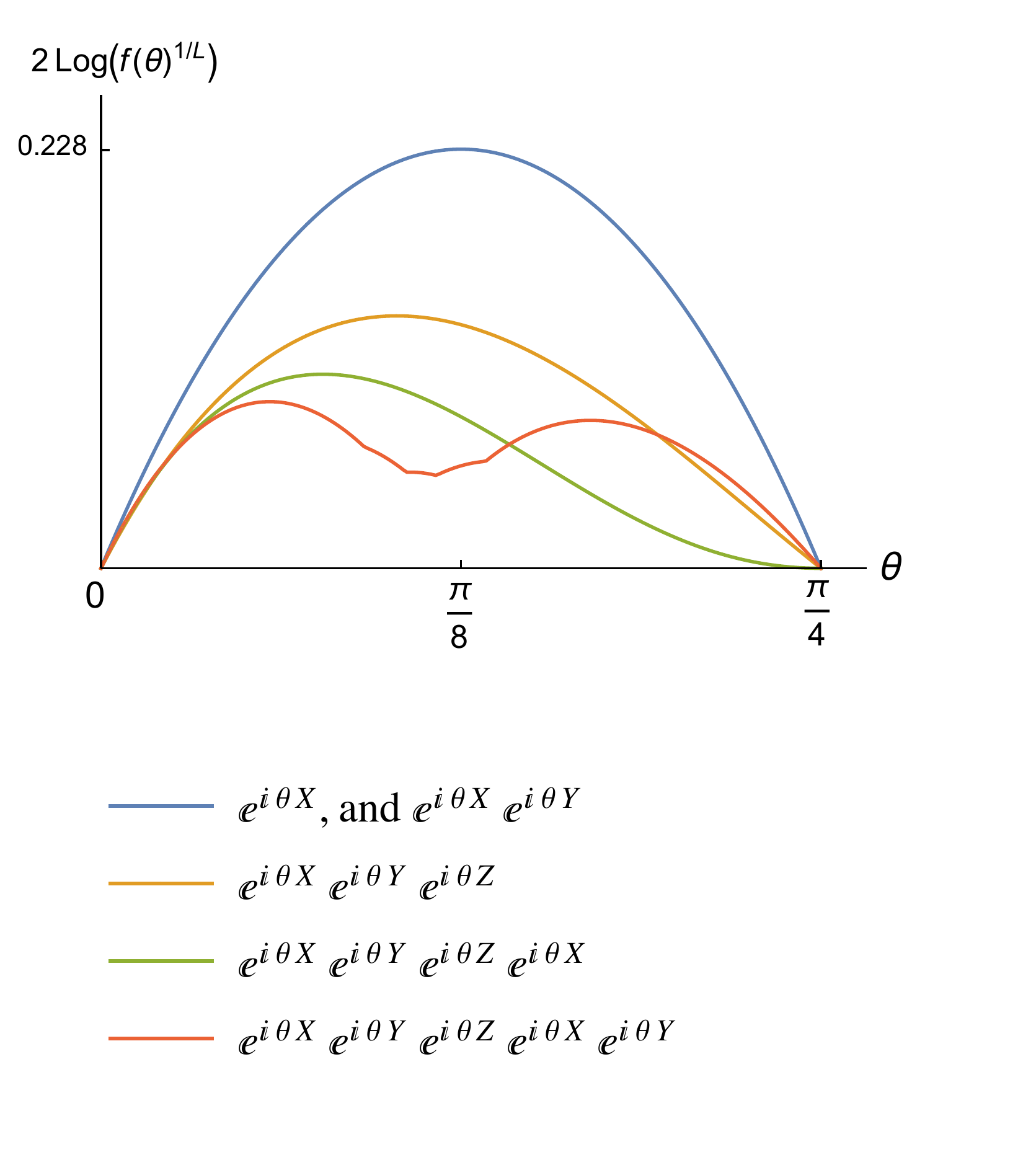}}
\quad
\subfloat[Two qubits]{\includegraphics[height=9cm,width=8cm]{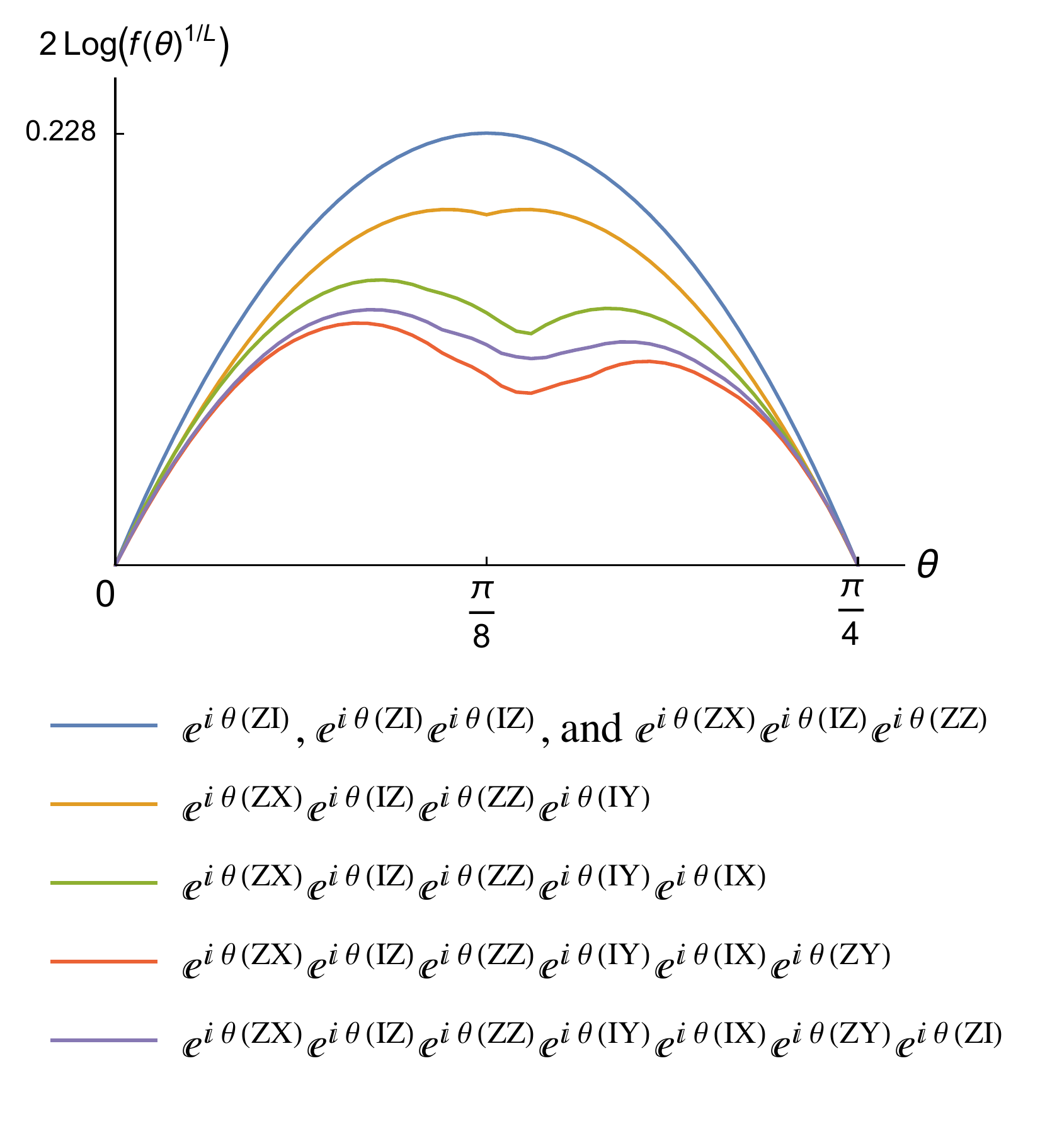}}
}{\caption{Theoretical asymptotic cost of simulating various one- and two-qubit exponential sequences. The angles are chosen to be equal for simplicity.
The function $f(\theta)$ is the 1-norm of the optimal decomposition of the sequence into Clifford gates, and, for a sequence of length $L$, $f(\theta)^{1/L}$ is the effective 1-norm per gate. The function $2 \log(f(\theta)^{1/L})$ is the contribution of each gate to $\alpha$, where $2^{\alpha}$ is the exponential scaling factor in the runtime of the simulation. The data is obtained using a convex optimization software (CVX with the SDPT3 solver).}}
\end{figure*}

\hd{Clifford decompositions of products of exponentiated Pauli operators.}
If a sequence of exponentials in \cref{cliff and exp pauli amp} multiplies to a Clifford gate, it can be commuted past the gates prior to it and absorbed into the input state, effectively reducing the number of non-Clifford gates in the circuit.
As we are dealing with a universal set of gates, it is generally intractable to determine whether such a product belongs to the Clifford group.
In the simplest case of multiplying two gates $e^{i\theta P} e^{i\phi Q}$ with $\theta,\phi \in (0,\pi/4)$, it can be shown that the product is a Clifford gate if and only if $PQ=\pm \id$ and $\theta+\phi \in (\pi/4)\bb{Z}$.
If $PQ\neq \pm \id$, there exists no better decomposition than the trivial one, obtained by decomposing each gate separately;
\begin{align}
\Omega\left(e^{i \theta_1 P} e^{i \theta_2 Q}\right) = \Omega\left(e^{i \theta_1 P}\right) \Omega\left(e^{i \theta_2 Q}\right).
\end{align}
This can be verified numerically for 1 and 2 qubits, and the general case follows from the fact that any pair of $n$-qubit Paulis $P$ and $Q$ can be mapped by a Clifford circuit to a pair of Paulis with support on the first two qubits only, together with the fact that $\Omega(U\otimes \id)=\Omega(U)$.

For products of three or more gates, we can find decompositions with significantly reduced 1-norm in certain cases.
Numerically, we observe that the possibility of 1-norm reduction is correlated with the rank of the binary representation of \hide{dimension of the algebra spanned by }the Paulis.
Specifically, for $x,z \in \bb{Z}_2^{n} $  write $X[x]\equiv X^{x_1}\otimes\dots \otimes X^{x_n}$, $Z[z]\equiv Z^{z_1}\otimes \dots \otimes Z^{z_n}$, and $P_{(x,z)}\equiv i^{x\cdot z} X[x] Z[z]$.
Then numerics in the two qubit case suggest that,
\begin{align}
\Omega\left( \prod_j e^{i \theta_j P_{v_j}} \right) < \prod_j \Omega(  e^{i \theta_j P_{v_j}})
\end{align}
if and only if $\text{rank} (v_1,  v_2, \cdots, v_k) < k$ with rank$(\cdot)$ computed mod 2. 
For example, it is possible to numerically verify that
\begin{align}
\Omega\left(e^{i\theta_1 X} e^{i\theta_2 Y} e^{i\theta_3 Z}\right) < \Omega\left(e^{i \theta_1 X}\right)\Omega\left(e^{i \theta_2 Y}\right)\Omega\left(e^{i \theta_3 Z}\right),
\end{align}
for any choice of angles $\theta_1,\theta_2,\theta_3 \in (0, \pi/4)$.
We can also numerically find the optimal Clifford expansion \footnote{For $0 \leq \theta \leq \pi/8$, the expansion
{\scriptsize
\begin{align}
e^{i\theta X} e^{i\theta Y} e^{i\theta Z} &= (\cos 3\theta - \sin \theta) \id
+ e^{i \pi/4} [\sin 2\theta (\cos \theta - \sin \theta)] SH \nonumber\\
&+ (\sqrt{2} \sin 2\theta \sin \theta) HX + (\sqrt{2} \sin 2\theta \sin \theta) SHS.
\end{align}}
is optimal.
More generally, the Clifford terms in \cref{eq: 3-sequence} vary depending on $(\theta_1, \theta_2, \theta_3)$, and a general expression is too tedious to write down, although finding the optimal decomposition numerically is easy on a case-by-case basis.}, say
\begin{align}\label{eq: 3-sequence}
e^{i\theta_1 X} e^{i\theta_2 Y} e^{i\theta_3 Z} = \sum_{i} \alpha_i C_i.
\end{align}
Therefore for any triple of n-qubit Paulis $(P,Q,W)$ which are equivalent to $ (X_1,Y_1,Z_1)$ via a Clifford $V$, the optimal decomposition is given by
\begin{align}
e^{i\theta_1 P} e^{i\theta_2 Q} e^{i\theta_3 W} = \sum_{i} \alpha_i V^\dagger(C_i\otimes \id\tn{n-1})V.
\end{align}
As before, each Clifford term $V^\dagger(C_i\otimes \id\tn{n-1})V$ is easy to cast as a product of a small number of Cliffords of the form $e^{i\phi R}$ where $R$ is a Pauli and $\phi \in (\pi/4)\bb{Z}$.
This way we obtain a decomposition of the many-body operator $e^{i\theta_1 P} e^{i\theta_2 Q} e^{i\theta_3 W}$ which is both optimal and in a form enabling fast phase-sensitive updates.

The reduction in the 1-norm by decomposing products of gates can be quite large, see for example \Cref{fig:product decompositions}, where the log of the 1-norm is plotted for different one and two qubit gate sequences.

\section{Amplitude estimation}
To illustrate the value of Clifford recompilation, we consider the problem of estimating circuit amplitudes of near-Clifford circuits.
Randomized sparsification can be used to estimate the amplitudes $\inner{x}{\psi}$ of the output state.
Indeed \cref{eq: avg of alg 1} implies that
\begin{align}
\bb{E}(\inner{x}{z}) =\inner{x}{\psi},
\end{align}
where $\ket{z}$ is the random vector generated by \Cref{alg: improved randomized sparsification}.
By a complex-valued version of Hoeffding's inequality  \cite{hoeffding1963probability,Stahlke}, the random vector $\ket{\tilde{\psi}}$ obtained by averaging $k$ instances of $\ket{z}$ satisfies
\begin{align}
\Pr \left[ |\inner{x}{\psi} - \inner{x}{\tilde{\psi}}| \geq \epsilon  \right] \leq \delta,
\end{align}
provided we take $k = O( \norm{a}_1^2 \epsilon^{-2} \log(\delta^{-1}))$.
Thus by averaging enough evaluations of $\inner{x}{z}$, one obtains a good estimate of $\inner{x}{\psi}$ with high probability.
Given the CH-form of $\ket{z}$, it takes time $O(n^2)$ to compute the amplitude $\inner{x}{z}$ using the methods in \cite{bravyi2018simulation}.
This gives a runtime scaling of $O( \norm{a}_1^2 m n^2 \epsilon^{-2} \log(\delta^{-1}))$ for computing an additive error estimate of the amplitude, using the improved sparsification Monte Carlo of \Cref{sec: improved sparsification}.
Typically we are interested in estimating the probability $\Pr(x) = |\inner{x}{\psi}|^2$, rather than the amplitude.
It is straightforward to check that if $|\inner{x}{\tilde{\psi}} - \inner{x}{\psi}| \leq c \epsilon$, where $c=\sqrt{2}-1$, then $\left| |\inner{x}{\tilde{\psi}}|^2  - P(x) \right| \leq \epsilon$ (as long as $\epsilon \leq 1$).
In other words, estimating the probability instead of the amplitude only requires multiplying $k$ by the constant $c^{-2} \approx 6$.

Note that the heuristic Metropolis simulator of \cite{bravyi2018simulation} relies on taking the ratio between two amplitude estimates, obtained exactly as described above, in order to determine the next move at each Metropolis step.

Estimating amplitudes as above relies on very few properties of stabilizer states.
It is only needed that stabilizer states admit an efficient, phase-sensitive description which can be updated under Clifford gates, and that this description can be used to efficiently compute any desired amplitude of the state.
Thus estimating amplitudes as described in this section, and indeed the heuristic Metropolis simulator of \cite{bravyi2018simulation}, can both be used for other states and gates satisfying similar criteria.
Perhaps the main candidate for this generalization is Fermionic Gaussian states and nearest-neighbor matchgates \cite{valiant2002quantum,Terhal2002}.
Nearest-neighbor matchgates acting on Fermionic Gaussian states are known to be classically simulable, and yet augmenting the gate set with nearest-neighbor SWAP gates is sufficient for universal quantum computing.
This motivates searching for matchgate-decompositions of the SWAP gate. 
Our numerical search suggests that the decomposition
\begin{align}\label{eq: matchgate decomposition}
\text{SWAP} = \frac{e^{i \pi /4}}{\sqrt{2}} G( \id, -i X) + \frac{e^{-i \pi 
/4}}{\sqrt{2}} G(\id, iX)
\end{align}
has minimal $1$-norm over all matchgate-decompositions of $\text{SWAP}$, where $G(\id, \pm iX)$ is the $\pm i$SWAP gate (which is a matchgate).
This decomposition has 1-norm equal to $\sqrt{2}$, which implies that amplitude estimation using this decomposition has a runtime scaling like $O(2^m)$, where $m$ is the number of SWAP gates.
Since it is possible to exactly compute the amplitude in time $O(2^m)$ by computing the $2^m$ matchgate contributions corresponding to the expansion in \cref{eq: matchgate decomposition}, this decomposition is not useful for this purpose.
It may be possible nevertheless to find smaller 1-norms by considering decompositions of parallel SWAP gates into nearest-neighbor matchgate circuits involving more than two qubits.   

Applying the same idea to local vs. entangling gates, we find numeric evidence that the decomposition
\begin{align}
\text{CPHASE} = \frac{e^{i\pi4}}{\sqrt{2}} (S_1^\dagger S_2^\dagger - i S_1 S_2)
\end{align}
achieves the minimum 1-norm over all decompositions of CPHASE into product gates (not necessarily Cliffords), where $S_i$ is the PHASE gate on qubit $i$.
Similar to the matchgate case, the 1-norm of this decomposition is too large to provide an improvement over brute force methods.
Nevertheless, our numerics is not exhaustive and a better decomposition may exist. 
It may also be the case that a decomposition of the form
\begin{align}
\text{CPHASE}\tn{k} = \sum_{j} a_j V_j,
\end{align}
exists such that each $V_j$ is a product gate on $2k$ qubits and $\norm{a}_1 < 2^{k/2}$.
Our numerical search did not detect such decompositions for $k=2$ allowing for up to 4 terms in the sum.

\section{Conclusion} We have described how to improve the simulators of Bravyi et al \cite{bravyi2018simulation} by recompiling the circuit using exponentiated Pauli operators. 
The theoretical runtime reduction is given by the factor $\ell / m$ where $\ell$ and $m$ are respectively the total and non-Clifford gate counts, which can be significant when the target circuit is vastly dominated by Clifford gates.
As an added benefit, the recompilation puts the circuit in a form where it is easier to search for optimal Clifford decompositions, since all non-Clifford gates appear in sequence.
This can significantly reduce the exponential prefactor in the runtime.
The circuit recompilation is well-suited to be used with a scheme for computing additive-error estimates of outcome probabilities.
This scheme requires much less work than the full techniques of \cite{bravyi2018simulation}, and can in principle be used with a broader class of states and gates.
 
\hd{Acknowledgements.} 
This research was supported by the U.S. Army Research Office through grant W911NF-14-1-0103,
and by funding from the Government of Ontario through TQT, and the Government of Canada through CFREF and NSERC.
\bibliography{library}
\bibliographystyle{plainnat}

\end{document}